\newcommand{\deff}{\mbox{$\stackrel{\rm def}{=}$}}
\newcommand{\sbinom}[2]{\left[ \begin{array}{c} #1 \\ #2 \end{array} \right] }
\newcommand{\field}[1]{\mathbb{#1}}
\newcommand{\Z}{\field{Z}}
\newcommand{\cA}{{\cal A}}
\newcommand{\cS}{{\cal S}}
\newcommand{\sP}{\field{P}}
\newcommand{\sG}{\field{G}}
\DeclareMathAlphabet{\mathbfsl}{OT1}{cmr}{bx}{it}
\newcommand{\uuu}{\kern-1pt\mathbfsl{u}\kern-0.5pt}
\newcommand{\vvv}{\kern-1pt\mathbfsl{v}\kern-0.5pt}
\newcommand{\myboxplus}{\kern1pt\mbox{\small$\boxplus$}}
\makeatletter \DeclareRobustCommand{\sbinom}{\genfrac[]\z@{}}
\newcommand{\G}[2]{\sbinom{{#1}\kern-1pt}{{#2}\kern-1pt}}
\newcommand{\Gq}[2]{\sbinom{{#1}\kern-0.25pt}{{#2}\kern-0.25pt}}
\newcommand{\Ps}{\smash{{\sP\kern-2.0pt}_q\kern-0.5pt(n)}}
\newcommand{\sPs}{\smash{{\sP\kern-1.5pt}_q(n)}}
\newcommand{\Ptwo}{\smash{{\sP\kern-2.0pt}_2\kern-0.5pt(n)}}
\newcommand{\Ptwom}{\smash{{\sP\kern-2.0pt}_2\kern-0.5pt(m)}}
\newcommand{\Ptwonm}{\smash{{\sP\kern-2.0pt}_2\kern-0.5pt(n+m)}}
\newcommand{\Ptwoa}{\smash{{\sP\kern-2.0pt}_2\kern-0.5pt(1)}}
\newcommand{\Ptwob}{\smash{{\sP\kern-2.0pt}_2\kern-0.5pt(2)}}
\newcommand{\Ptwoc}{\smash{{\sP\kern-2.0pt}_2\kern-0.5pt(3)}}
\newcommand{\Ptwod}{\smash{{\sP\kern-2.0pt}_2\kern-0.5pt(4)}}
\newcommand{\Ptwoe}{\smash{{\sP\kern-2.0pt}_2\kern-0.5pt(5)}}
\newcommand{\Ptwof}{\smash{{\sP\kern-2.0pt}_2\kern-0.5pt(6)}}
\newcommand{\Ptwokm}{\smash{{\sP\kern-2.0pt}_2\kern-0.5pt(2k-1)}}
\newcommand{\Pone}{\smash{{\sP\kern-2.5pt}_2\kern-0.5pt(n{-}1)}}
\newcommand{\Gr}{\smash{{\sG\kern-1.5pt}_q\kern-0.5pt(n,k)}}
\newcommand{\Gi}{\smash{{\sG\kern-1.5pt}_q\kern-0.5pt(n,i)}}
\newcommand{\Gj}{\smash{{\sG\kern-1.5pt}_q\kern-0.5pt(n,j)}}
\newcommand{\Grmk}{\smash{{\sG\kern-1.5pt}_q\kern-0.5pt(n,n-k)}}
\newcommand{\Grdk}{\smash{{\sG\kern-1.5pt}_q\kern-0.5pt(2k,k)}}
\newcommand{\Grekappa}{\smash{{\sG\kern-1.5pt}_q\kern-0.5pt(n,e+1-\kappa)}}
\newcommand{\Grtwoekappa}{\smash{{\sG\kern-1.5pt}_q\kern-0.5pt(n,2e+1-\kappa)}}
\newcommand{\Gremkappa}{\smash{{\sG\kern-1.5pt}_q\kern-0.5pt(n,e-\kappa)}}
\newcommand{\Gn}{\smash{{\sG\kern-1.5pt}_2\kern-0.5pt(n,n{-}1)}}
\newcommand{\Gnq}{\smash{{\sG\kern-1.5pt}_q\kern-0.5pt(n,n{-}1)}}
\newcommand{\Gone}{\smash{{\sG\kern-1.5pt}_2\kern-0.5pt(n,1)}}
\newcommand{\Gqone}{\smash{{\sG\kern-1.5pt}_q\kern-0.5pt(n,1)}}
\newcommand{\GTwo}{\smash{{\sG\kern-1.5pt}_2\kern-0.5pt(n,k)}}
\newcommand{\GTwonk}[2]{{\smash{{\sG\kern-1.5pt}_2\kern-0.5pt({#1},{#2})}}}
\newcommand{\Gnk}{\smash{{\sG\kern-1.5pt}_2\kern-0.5pt(n,n{-}k)}}
\newcommand{\Greone}{\smash{{\sG\kern-1.5pt}_q\kern-0.5pt(n,e{+}1)}}
\newcommand{\Gretwo}{\smash{{\sG\kern-1.5pt}_q\kern-0.5pt(n,e{+}2)}}
\newcommand{\be}[1]{\begin{equation}\label{#1}}
\newcommand{\ee}{\end{equation}}
\newcommand{\Cref}[1]{Co\-rol\-la\-ry\,\ref{#1}}
\newtheorem{theorem}{Theorem}
\begin{document}

\title{Local Rank Modulation for Flash Memories \vspace{-1.0ex}}

\author{\authorblockN{Michal Horovitz}
\authorblockA{Dept. of Computer Science\\
Technion-Israel Institute of Technology\\
Haifa 32000, Israel \\
Email: michalho@cs.technion.ac.il} \and
\authorblockN{Tuvi Etzion}
\authorblockA{Dept. of Computer Science\\
Technion-Israel Institute of Technology\\
Haifa 32000, Israel \\
Email: etzion@cs.technion.ac.il}}

\maketitle
\begin{abstract}
Local rank modulation scheme was suggested recently for
representing information in flash memories
in order to overcome drawbacks of rank modulation.
For $0 < s\leq t\leq n$ with $s$ divides $n$, an $(s,t,n)$-LRM scheme
is a local rank modulation scheme
where the $n$ cells are locally viewed cyclically through a
sliding window of size $t$
resulting in a sequence of small permutations which
requires less comparisons and less distinct values.
The gap between two such windows equals to~$s$.
In this work, encoding, decoding, and asymptotic enumeration of the
$(1,3,n)$-LRM scheme is studied.
The techniques which are suggested have some generalizations
for $(1,t,n)$-LRM, $t > 3$, but the proofs will become more
complicated. The enumeration problem is presented
also as a purely combinatorial problem.
Finally, we prove the conjecture that the size
of a constant weight $(1,2,n)$-LRM Gray code with weight two is at most $2n$.
\end{abstract}

\section{Introduction}
Flash memory is a non-volatile technology that is both electrically
programmable and electrically erasable. It incorporates a set of cells
maintained at a set of levels of charge to encode information.
While raising the charge level of a cell is an easy operation,
reducing the charge level requires the erasure of the whole block to which the cell belongs.
For this reason charge is injected into the cell over several iterations.
Such programming is slow and can cause errors since cells may be injected with extra unwanted charge.
Other common errors in flash memory cells are due to charge leakage and reading disturbance that
may cause charge to move from one cell to its adjacent cells.
In order to overcome these problems, the novel
framework of \emph{rank modulation} was introduced in~\cite{rankModulation1_2009}.
In this setup, the information is carried by the relative ranking of the
cells' charge levels and not by the absolute values of the charge levels.
Denote the charge level in the $i$th cell by $c_i$, $0\leq i < n$,
and $c=(c_0,c_2,\ldots, c_{n-1})$ is the sequence of the charges in~$n$ cells.
A codeword in this scheme is the permutation defined by the
order of the charge levels, from the highest one to the lowest one, e.g.
if $n=5$ and $c=(3,5,2,7,10)$ then the permutation,
i.e., the codeword in the rank modulation scheme, is $[5,4,2,1,3]$.
This allows for more efficient programming of cells, and coding by the ranking of the cells' charge levels
is more robust to charge leakage than coding by their actual values.
The \emph{push-to-the-top} operation is a basic minimal cost
operation in the rank modulation scheme by which a single cell has its charge
level increased such that it will be the highest of the set.

A drawback of the rank modulation scheme is the need for a large number of
comparisons when reading the induced permutation. Furthermore, distinct
$n$ charge levels are required for a group of~$n$ cells.
The \emph{local rank modulation} scheme was suggested in order to overcome these problems.
In this scheme, the $n$~cells are locally viewed through a
sliding window, resulting in a sequence of permutations
for a much smaller number of cells which
requires less comparisons and less distinct values.
For $0 < s\leq t\leq n$, where $s$ divides $n$,
the $(s,t,n)$-LRM scheme, defined in \cite{GLSB11,WJB09},
is a local rank modulation scheme over~$n$ physical cells,
where~$t$ is the size of each sliding window and~$s$ is the gap between two such windows.
In this scheme the permutations are over $\{1,2,\ldots, t\}$,
i.e., form $S_t$,
and the push-to-the-top operation
merely raises the charge level of the selected cell
above those cells which are comparable with it.
We say a sequence with $\frac{n}{s}$ permutations from $S_t$ is an
$(s,t,n)$-LRM scheme \emph{realizable} if it can be demodulated
to a sequence of charges in~$n$ cells under the $(s,t,n)$-LRM scheme.
Except for the degenerate case where $s=t=n$, not every sequence is realizable.

The $(1,2,n)$-LRM scheme was defined in \cite{GLSB11}
in order to get the simplest hardware implementation.
The demodulated sequences of permutations
in this scheme contain all the binary words except two, the all-ones and all-zeros sequences.
Therefore, the number of codewords in this scheme is $2^n-2$.

In this paper we focus on the $(1,t,n)$-LRM schemes for $t\geq 3$,
and suggest a demodulation method for these schemes. The
$(1,t,n)$-LRM scheme is a local rank modulation scheme over $n$ physical cells,
where the size of each sliding window is $t$, and each cell starts a new window.
Since the size of a sliding window is $t$,
demodulated sequences of permutations in this scheme contain $t!$ permutations.
Therefore, we need $t!$ symbols to present the demodulated sequences of permutations.

Let $s=(s_1,s_2,\ldots,s_{t!})$ be an order of the $t!$ permutations from $S_t$,
and $\Sigma =\{1,2, \ldots, t!\}$ be an alphabet where $i$ represents the permutation $s_i$.
A sequence $\alpha=(\alpha_0,\alpha_1,\ldots, \alpha_{n-1})$ over the alphabet $\Sigma$
is called a \emph{base-word} in the $(1,t,n)$-LRM scheme, and it
is realizable, if there exists a sequence of charges
$c=(c_0,c_1,\ldots , c_{n-1})$, such that for each $i$, $0\leq i \leq n-1$,
$\alpha_i$ represent the permutation induced by $c_i, c_{i+1}, \ldots, c_{i+t-1}$,
where indices are taken modulo $n$. The indices in the base-words and codewords
are also taken modulo $n$ as in the charge levels.

In this paper we produce a mapping method, in which each $\alpha$,
a base-word over the alphabet of size $t!$,
is mapped to a codeword $g=(g_0, g_1,\ldots, g_{n-1})$ over an alphabet of size $t$.
A codeword is called \emph{legal}
if there exists a realizable base-word which is mapped to it.
We have to make sure that
two distinct realizable base-words
are mapped into two distinct legal codewords.

Let $M_t$ be the number of legal codewords in the $(1,t,n)$-LRM scheme.
Clearly, $M_t \leq t^n$, but
this upper bound is not tight since there exist illegal codewords.
We conjecture that $\lim\limits_{n\to \infty}\frac{M_t}{t^n}=1$ and
prove this conjecture for $t=3$ and $t=4$.

The rest of this paper is organized as follows.
The encoding, decoding and asymptotic enumeration of $(1,3,n)$-LRM
scheme is presented in Section \ref{sec:13nScheme}.
Generalizations, especially for the enumeration technique
for the $(1,t,n)$-LRM scheme, $t>3$, is
given in Section~\ref{sec:1tnScheme}. The generalization
of the asymptotic enumeration problem is presented as
a combinatorial problem. The solution for the $(1,4,n)$-LRM scheme is also given.
In Section~\ref{sec:weight} it is proved that
the size of a constant weight $(1,2,n)$-LRM Gray code
with weight two is at most $2n$. Thus, proving a conjecture
from~\cite{GLSB11}.
In Section~\ref{sec:conclusion} conclusion and problems for future research
are presented.

\section{The $(1,3,n)$-LRM scheme}
\label{sec:13nScheme}

In the $(1,3,n)$-LRM scheme the size of each sliding window is~$3$.
Therefore, an alphabet of size $3!$ is required
to present the demodulated sequences of permutations.
\begin{center}
\begin{tabular}{rrr}
$s_1=[1,2,3]$ & $s_2=[1,3,2]$ \\
$s_3=[2,1,3]$ & $s_4=[3,1,2]$ \\
$s_5=[2,3,1]$ & $s_6=[3,2,1]$ \\
\end{tabular}
\end{center}

The alphabet of the base-words is $\Sigma=\{1,2,\ldots, 6\}$,
where the symbol $\ell$ represents the permutation $s_{\ell}$.
Let ${\alpha=(\alpha_0,\alpha_1,\ldots, \alpha_{n-1})}$ be a base-word.
Note that the last two cells which determine $\alpha_i$ ($0\leq i\leq n-1$)
are the first two cells which determine $\alpha_{i+1}$,
i.e., the permutation related to $\alpha_{i+1}$ is
obtained from $\alpha_i$ by the following way.
The symbol~$1$ in the permutation related to $\alpha_i$ is omitted,
the symbols $2,3$ in the permutation are replaced with $1,2$, respectively,
and a new symbol~$3$ is
inserted before~$1,2$, between them, or after both of them.
Therefore, given $\alpha_i$, there are exactly $3$ options for $\alpha_{i+1}$.

Let $\Sigma^1=\{1,3,5\}$ and $\Sigma^2=\{2,4,6\}$
be a partition of $\Sigma$ into the even and
the odd symbols, respectively.
Note that for each $\Sigma^i$, the permutations related to the
symbols in $\Sigma^i$ agree on the order of cells 2 and 3.
Therefore, they also agree on the $3$ possibilities of their succeeding permutation.
Denote the set of symbols of
these succeeding permutations by $\tilde{\Sigma}^{i}$.
Thus, we have $\tilde{\Sigma}^1=\{1,2,4\}$ and $\tilde{\Sigma}^2=\{3,5,6\}$.

The base-word $\alpha$ is mapped to a codeword
$g=(g_0,g_1,\ldots,g_{n-1})$ over the alphabet $\{0,1,2\}$.
The relations between $\alpha_{i-1}$, $\alpha_i$,
and $g_i$, where $0\leq i\leq n-1$,
are presented in Table~\ref{table:encodingKey13n}.
This table induces a mapping from the realizable base-words to the codewords.
As mentioned before, given $\alpha_{i-1}$, there are three options for $\alpha_i$.
In all these options the sub-permutation of $\{1,2\}$ is the same,
and the difference is the index of symbol~$3$ in the
permutation related to $\alpha_i$.
Thus, $g_i$ represents the index of symbol $3$ in this permutation
and it equal to the number of symbols which are
to the right of the symbol~$3$ in the
permutation related to $\alpha_i$.
In other words, $g_i$ represents the relation
between $c_{i+2}$, the charge level in cell $i+2$,
and the charge levels in two cells which proceed
it, i.e., $c_i$ and $c_{i+1}$.
\begin{table}[!ht]
\begin{center}
\begin{tabular}{|l||l|l|l|}
\hline
$\alpha_{i-1}\in S^1$ & $\alpha_i=1$ & $\alpha_i=2$ & $\alpha_i=4$ \\
\hline
$\alpha_{i-1}\in S^2$ & $\alpha_i=3$ & $\alpha_i=5$ & $\alpha_i=6$ \\
\hline
\hline
 & $g_i=0$ & $g_i=1$ & $g_i=2$ \\
\hline
\end{tabular}
\end{center}
\caption{The encoding key of the $(1,3,n)$-LRM scheme}
\label{table:encodingKey13n}
\end{table}

\vspace{-0.3cm}

Note that there might exist non-realizable base-words
which are mapped to codewords by this method.
A base-word $\alpha$, which can be mapped to a codeword in this method,
must satisfy only the dependencies between $\alpha_i$
and $\alpha_{i+1}$ ($0\leq i \leq n-1$), but
it still can be non-realizable.
The $n$ cells are viewed cyclically, i.e.,
the charge of the last cell, $c_{n-1}$ is compared with
the charge in the first two cells, $c_0$ and $c_1$,
and the same works for the three charge levels $c_{n-2}$, $c_{n-1}$,
and $c_0$.
Therefore, there might exists a non-realizable dependency
between the charge levels in the last two cells
and the charge levels in the first two cells.
Such a non-realizable base-word will be called
a \emph{cyclic non-realizable} base-word.
For example, the following base-words are cyclic non-realizable.
\begin{itemize}
\item $1^n$ - the charge levels are always decreased.
\item $6^n$ - the charge levels are always increased.
\item $(2,5)^{n/2}$, where $n$ is even -
the charge level of each cell is between the charge levels
of the two cells which proceed it.
\end{itemize}
\begin{theorem}
\label{thm:deocodingIn1_3_nLRM}
Table~\ref{table:encodingKey13n} provides an one-to-one mapping
between the realizable base-words and the legal codewords
\end{theorem}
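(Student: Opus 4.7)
Since legality is defined as being in the image of some realizable base-word, the restriction of the map to realizable base-words is surjective onto legal codewords by definition, so only injectivity requires proof. The plan is to recover $\alpha$ from $g$ by first extracting the parity sequence $(p_0,\dots,p_{n-1})$ defined by $p_j=1$ if $\alpha_j\in\Sigma^1$ and $p_j=2$ if $\alpha_j\in\Sigma^2$, and then reading Table~\ref{table:encodingKey13n} to recover each $\alpha_i$ from the pair $(p_{i-1},g_i)$.

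A direct inspection of the six permutations $s_1,\dots,s_6$ shows that $\alpha_j\in\Sigma^1$ iff $c_{j+1}>c_{j+2}$. Combined with the already-noted interpretation of $g_i$ as the rank of $c_{i+2}$ inside the window $\{c_i,c_{i+1},c_{i+2}\}$, this gives the following local rule for the parity sequence: $g_i=0$ forces $p_i=1$ and $g_i=2$ forces $p_i=2$ unconditionally, while $g_i=1$ forces $p_i\neq p_{i-1}$, because a middle-ranked $c_{i+2}$ lies strictly between $c_i$ and $c_{i+1}$, so the sign of $c_{i+1}-c_{i+2}$ is opposite to that of $c_i-c_{i+1}$.

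If there is at least one index $i$ with $g_i\neq 1$, that index anchors $p_i$ and the local rule propagates uniquely around the cycle of length $n$, pinning down every $p_j$; cyclic consistency is automatic because the parity sequence read off from the charges of any realizable pre-image is already a valid solution. In the exceptional case $g=1^n$, the rule forces strict alternation of $p$, which is impossible for $n$ odd and leaves exactly the two candidates $\alpha=(2,5,2,5,\dots)$ and $\alpha=(5,2,5,2,\dots)$ for $n$ even. Both of these candidates appear in the cyclic non-realizable examples listed just before the theorem: running around the cycle one derives a strict chain $c_0>c_2>c_4>\dots>c_0$, which is a contradiction.

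Consequently, no legal codeword equals $1^n$, and for every legal $g$ the parity sequence, and hence $\alpha$ itself, is uniquely determined, which yields injectivity. The main obstacle is the all-ones case, where the local parity rule leaves a two-fold ambiguity and one must invoke the global cyclic obstruction to exclude both alternating candidates.
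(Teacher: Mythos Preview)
Your proof is correct and follows essentially the same route as the paper's: both arguments hinge on the observation that any index $i$ with $g_i\in\{0,2\}$ pins down the parity of $\alpha_i$ (equivalently, the row of Table~\ref{table:encodingKey13n}), after which the table propagates the decoding uniquely around the cycle. Your treatment is slightly more explicit---you name the parity sequence and derive the flip rule for $g_i=1$, and you actually justify why $g=1^n$ is illegal (the paper merely asserts it)---but these are elaborations of the same idea rather than a different method.
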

\begin{proof}
Obviously, each base-word is mapped to exactly one codeword.
Now, we prove that the other direction is also true.
Clearly, $1^n$ is an illegal codeword as
the charge level of each cell should be between the charge levels
of the two cells which proceed it.
Thus, given a legal codeword $g=(g_0,g_1,\ldots,g_{n-1})$,
there exists $0\leq i \leq n-1$, such that $g_i\in \{0,2\}$.
If $g_i=0$ then we have $\alpha_i\in \{1,3\}$, i.e., $\alpha_i$ is odd.
Thus, $\alpha_{i+1}$ is determined by an
entry in the first row in Table~\ref{table:encodingKey13n},
where the column is chosen by the value of $g_{i+1}$.
If $g_i=2$ then we have $\alpha_i\in \{4,6\}$, i.e., $\alpha_i$ is even.
Thus, $\alpha_{i+1}$ is determined by an entry
in the second row in Table~\ref{table:encodingKey13n},
where the column is chosen  by the value of $g_{i+1}$.
Now, it is easy to determine $\alpha_{i+2},\alpha_{i+3},\ldots, \alpha_{i+n-1}, \alpha_{i+n}=\alpha_i$
one after one in this cyclic order.
Note that if $\alpha_i$ is not equal
to an optional initial value
(from the set $\{1,3\}$ if $g_i=0$ and from $\{4,6\}$ if $g_i=2$)
then we can conclude that $g$ is illegal.
\end{proof}

Decoding a given codeword to a base-word
doesn't guarantee that the codeword is legal,
because the accepted base-word may be cyclic non-realizable.
For example, the cyclic non-realizable base-word
${\alpha=1^n}$ is mapped to the illegal codeword ${g=0^n}$.
Given such a codeword, it would be interesting to decide
efficiently if it is a legal codeword or not.

Next, the main theorem for $(1,3,n)$-LRM schemes is given.
\begin{theorem}
\label{thm:numberOfWordsIn1_3_nLRM}
If $M_3$ is the number of legal codewords in the $(1,3,n)$-LRM scheme
then $\lim\limits_{n\to \infty}\frac{M_3}{3^n}=1$.
\end{theorem}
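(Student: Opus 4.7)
The plan is to reduce the theorem to bounding the number of locally consistent cyclic base-words which are not cyclically realizable. By Theorem~\ref{thm:deocodingIn1_3_nLRM} the legal codewords biject with the realizable base-words, so if $L_n$ is the total number of locally consistent cyclic base-words and $\beta_n$ is the number of those that fail to be realizable, then $M_3=L_n-\beta_n$. A direct transfer-matrix computation gives $L_n=3^n+(-1)^n$: the $6\times 6$ transition matrix prescribed by Table~\ref{table:encodingKey13n} has rank two (its rows depend only on whether the index is in $S^1$ or $S^2$), and on its $2$-dimensional image it acts as $\bigl(\begin{smallmatrix} 1 & 2 \\ 2 & 1\end{smallmatrix}\bigr)$, whose eigenvalues are $3$ and $-1$. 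The theorem thus reduces to showing $\beta_n=o(3^n)$.

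To analyze $\beta_n$ I would introduce the pair of sign sequences $\sigma_i=\mathrm{sign}(c_{i+1}-c_i)$ and $\tau_i=\mathrm{sign}(c_{i+2}-c_i)$ implicit in a base-word $\alpha$; a short case check from Table~\ref{table:encodingKey13n} shows that both $\sigma_{i+1}$ and $\tau_i$ are determined by $g_i$ and $\sigma_i$, with $g_i=0$ forcing $\sigma_{i+1}=-$, $g_i=2$ forcing $\sigma_{i+1}=+$, and $g_i=1$ flipping $\sigma_i$. Cyclic realizability of $\alpha$ is then equivalent to acyclicity of the directed comparison graph on $\Z_n$ whose arcs of step sizes $1$ and $2$ encode the inequalities derived from each window. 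Two easy obstructions to realizability are (a) $\sigma$ constant, which forces cyclic monotonicity of the charges and yields exactly the two base-words $1^n$ and $6^n$; and (b) $\sigma$ alternating (for $n$ even), which splits the comparison graph into sub-cycles on even- and odd-indexed cells, each of which must avoid cyclic monotonicity in its own $\tau$-entries. The alternating family contains $2^{n+1}$ locally-consistent members; a direct inclusion--exclusion shows that the non-realizable ones, namely those with constant even-indexed or constant odd-indexed $\tau$, number at most $O(2^{n/2})$.

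The remaining step, which I expect to be the main obstacle, is to control the ``mixed'' non-realizable base-words whose sign pattern is neither constant nor purely alternating. The example $\alpha=(1,2,5,1)$ for $n=4$---where $\alpha_1$ forces $c_1>c_3$ while $\alpha_3$ forces $c_3>c_1$---shows that directed cycles in the comparison graph can arise from such non-canonical patterns, so (a) and (b) do not in general exhaust $\beta_n$. My plan is to bound these mixed obstructions by a secondary transfer matrix that, as we scan $i\mapsto i+1$, carries a bounded-size state recording the partial-order information relevant to detecting a forming cycle in the comparison graph on $\Z_n$. Every directed cycle in that graph has total step a nonzero multiple of $n$ with individual steps in $\{\pm1,\pm2\}$, which translates into strong local constraints on the $(\sigma,\tau)$ profile and should force the Perron eigenvalue of the secondary transfer matrix to be strictly less than $3$. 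This would yield $\beta_n=O(\lambda^n)$ for some $\lambda<3$, hence $M_3=3^n+(-1)^n-o(3^n)$ and $M_3/3^n\to 1$. Ensuring that the secondary state space is both complete (catches every mixed obstruction) and sharp (does not overcount realizable base-words) is where the delicate part of the argument will lie.
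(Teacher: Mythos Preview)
Your reduction to $M_3 = L_n - \beta_n$ with $L_n = 3^n + (-1)^n$ is correct, and the transfer-matrix computation of $L_n$ is clean. Your treatment of the two ``easy'' families of obstructions (constant $\sigma$ and, for even $n$, alternating $\sigma$) is also correct. But the theorem lives entirely in the mixed case, and there you have only an intention, not an argument. You propose a ``secondary transfer matrix'' with a bounded state that records ``the partial-order information relevant to detecting a forming cycle'' in the comparison graph on $\Z_n$; you neither construct this state space nor verify that its Perron eigenvalue is below $3$. A directed cycle in the comparison graph can wrap around $\Z_n$ arbitrarily many times (its total step being any nonzero multiple of $n$), so it is not a priori clear what bounded amount of information suffices to certify acyclicity as you scan $i\mapsto i+1$. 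In short, the proposal identifies the right target ($\beta_n=o(3^n)$) but stops exactly at the hard step.

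The paper takes a quite different route that avoids the comparison-graph analysis altogether. It works directly on codewords and, while scanning $g_0,\ldots,g_{i-2}$, tracks a \emph{state} consisting of the permutation of $(c_{i-1},c_i)$ together with the set of still-possible positions of $c_{i-1},c_i$ relative to $c_0,c_1$. A state is \emph{complete} when that set is maximal; there are exactly two complete states, and the transition table shows that once a complete state is reached it persists. The key observation is a concrete \emph{reset pattern}: any occurrence of $(2,0,1,1)$ in $g'=(g_0,\ldots,g_{n-3})$ forces entry into a complete state, regardless of what came before. Perron--Frobenius applied to the four-state automaton that \emph{avoids} $(2,0,1,1)$ then gives a dominant eigenvalue $\approx 2.9615<3$, so all but $o(3^{n-2})$ prefixes $g'$ reach a complete state. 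A small table shows that for each such $g'$, summing over the two possible initial permutations $\pi$, there are exactly $9=3^2$ realizable cyclic completions, yielding $M_3\ge 9M'$ with $M'/3^{n-2}\to 1$. This ``reset-to-complete-state'' idea is precisely what is missing from your outline: rather than enumerating and bounding all obstructions to realizability, one exhibits a short local pattern whose presence \emph{guarantees} realizability (with a known completion count), and then shows that almost every word contains it.
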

\begin{proof}
Note that $g_i$ is determined by $c_i$, $c_{i+1}$,
and $c_{i+2}$.
$g_i=0$ if $c_{i+2}$ is lower than $c_i$ and $c_{i+1}$;
$g_i=1$ if $c_{i+2}$ is between $c_i$ and $c_{i+1}$;
and $g_i=2$ if $c_{i+2}$ is higher than $c_i$ and $c_{i+1}$.

Given a sub-codeword $(g_{0}, g_{1},\ldots, g_{i-2})$
obtained by the charge levels $c_0,c_1,\ldots,c_i$,
the relation between the charge levels in the $i$th cell, $c_i$,
and the first two cells, $c_0$ and $c_1$,
might have a few options.
These options will be denoted by $0$, $1$ and $2$,
where $0$ represents that $c_i$ is lower than $c_0$ and $c_1$,
$1$ represents that $c_i$ is between them,
and $2$ represents that $c_i$ is higher than both of them.
For each $i$, $3\leq i < n$,
we provide two properties regarding the charge levels $c_{i-1}$ and $c_{i}$:

\begin{itemize}
\item[(Q.1)] the permutation induced by $c_{i-1}$ and $c_{i}$ ($[1,2] $ or $[2,1]$);

\item[(Q.2)] the set of all possible pairs of the relations
between the charge levels $c_{i-1}$ and $c_i$ and the
charge levels $c_0$ and $c_1$.
\end{itemize}

The elements of the set defined in (Q.2) will be denoted by
pairs $(x,y)$, $x,y \in \{0,1,2\}$, where $x$ represents the relation between $c_{i-1}$
and the first two cells, and $y$ represents the relation between $c_{i}$
and the first two cells. Note, that not all the nine pairs $(x,y)$ can
be obtained for a given permutation defined by (Q.1).

We call each set of properties defined by (Q.1) and (Q.2) a \emph{state},
and the state at index $i$ will be denoted by~$P_i$.
Given a sub-codeword $g' = (g_0, g_1, \ldots, g_{n-4},g_{n-3})$,
the states in the sequence $(P_3,P_4,\ldots, P_{n-1})$ are
determined one after one in this order.
It is easy to verify that
if $P_i=P_j$ for some $3 \leq i <j < n-1$ then $P_{i+1}=P_{j+1}$.
A~state which has the all possibilities in the second
property will be called a \emph{complete state}.
It is easy to verify that in the $(1,3,n)$-LRM there are two
complete states.

\noindent
1) state~1 : $[1,2]$, $\{ (0,0),(1,1),(1,0),(2,2),(2,1),(2,0) \}$.
\noindent
2) state~2 : $[2,1]$, $\{ (0,0),(0,1),(0,2),(1,1),(1,2),(2,2) \}$.

Given $g_{i-1}$, the
succeeding state $P_{i+1}$ of a state $P_i$ which is a complete state,
is given in Table~\ref{table:fullStatesSeq13n}.
\begin{table}[!ht]
\begin{center}
\begin{tabular}{|l||l|l|l|}
\hline
\backslashbox{$P_i$}{$g_{i-1}$} & $0$		& $1$		& $2$		\\
\hline
\hline
	state~1						& state~1	& state~2	& state~2	\\
\hline
	state~2						& state~1	& state~1	& state~2	\\
\hline
\end{tabular}
\end{center}
\caption{Succeeding states in the $(1,3,n)$-LRM scheme}
\label{table:fullStatesSeq13n}
\end{table}

Denote by $\pi$ the permutation defined by the charge levels in the first two cells.
Given $\pi$ and $g' = (g_0, g_1, \ldots, g_{n-4},g_{n-3})$,
the sub-base-word $(\alpha_0,\alpha_1, \ldots, \alpha_{n-3})$
of a realizable base-word which corresponds to $\pi$ and $g'$ is determined unambiguously.
But, $\alpha_{n-2}, \alpha_{n-1}$ might have a few options.
These options are determined by the state $P_{n-1}$ and the permutation~$\pi$.
Each option provides a distinct base-word which is
represented by the state $P_{n-1}$ and the permutation~$\pi$.
For example the permutation $\pi=[2,1]$ and the
sub-codeword $g'=2^{n-2}$ imply that
the charge levels are always increased, where
$c_0$ is the lowest, and $c_{n-1}$ is the highest.
Therefore, the only base-word it represents
is $(\underbrace{6,6,\ldots, 6}_{n-2\ times},3,2)$,
where $P_{n-1}=([2,1],\{(2.2)\})$.

In Table~\ref{table:fullStatesNumWords13n}
we enumerate the number of base-words represented
by $\pi$ and $P_{n-1}$ for state~1 or state~2. Given~$g'$,
$c_0$ is compared with $c_{n-2}$ and $c_{n-1}$, to obtain $g_{n-2}$;
and $c_1$ is compared with $c_{n-1}$ and~$c_0$, to obtain $g_{n-1}$.
Note, that some different pairs $(x,y)$ of a given state result in the
same pair $(g_{n-2},g_{n-1})$ for a given $\pi$.
Note also that in Table~\ref{table:fullStatesNumWords13n},
the sum of values in each row and in each column equals to $9=3^2$.
\begin{table}[!ht]
\begin{center}
\begin{tabular}{|l||l|l|}
\hline
\backslashbox{$P_{n-1}$}{$\pi$} & $[1,2]$	& $[2,1]$	\\
\hline
\hline
	state~1	& 5									& 4									\\
\hline
	state~2	& 4									& 5									\\
\hline
\end{tabular}
\end{center}
\caption{The number of base-words represented by the complete
states in the $(1,3,n)$-LRM scheme}
\label{table:fullStatesNumWords13n}
\end{table}

If a sub-codeword $g'=(g_0, g_1, \ldots, g_{n-4}, g_{n-3})$
contains the sequence $(2,0,1,1)$ as a subsequence at indices $(i-3,i-2,i-1,i)$,
then this sequence ends with state~1, i.e. $P_{i+2}$ is state~1.
The reason is that in this case $c_{i+1}$ and $c_{i+2}$ have no dependency on the
charge levels of $c_{i-3}$ or $c_{i-2}$,
i.e., each one of $c_{i+1}$ and $c_{i+2}$ can be
lower than, between, or higher than $c_{i-3}$ and $c_{i-2}$.
Therefore, the relation between $c_{i+1}$ and $c_{i+2}$,
and the first two cells, $c_0$ and $c_1$,
has all the possibilities, i.e., $P_{i+2}$ is a complete state.
It is easy to verify that $c_{i+2}$ is lower than $c_{i+1}$,
and thus $P_{i+2}$ is state~1.
By Table~\ref{table:fullStatesSeq13n}
we have that $P_{n-1}$ of this sequence must
be a complete state (state~1 or state~2).

By using the well known Perron-Frobenius Theorem~\cite{Fro12,Per07},
we can compute the behavior of the number
of sequences of length $n-2$ over the alphabet $\{0,1,2\}$
which don't include $(2,0,1,1)$ as a subsequence.
First, an automata whose states accept
all the sequences over $\{0,1,2\}$
which don't contain the subsequence $(2,0,1,1)$ is given.
\begin{center}
\begin{tikzpicture}[->,>=stealth',shorten >=1pt,auto,node distance=1.5cm]
  \tikzstyle{every state}=[draw, minimum size=17pt, font={\scriptsize}]

  \node[state] (1)              {$q_1$};
  \node[state] (2) [right of=1] {$q_2$};
  \node[state] (3) [right of=2] {$q_3$};
  \node[state] (4) [right of=3] {$q_4$};

  \path
  		(1) edge[state]              node[yshift=-0.2cm]         {2}   (2)
  		(1) edge[state,loop]         node[above, yshift=-0.2cm]  {0,1} (1)
  		
  		(2) edge[state]              node[yshift=-0.2cm]         {0}   (3)
  		(2) edge[loop,state]         node[above, yshift=-0.2cm]  {2}   (2)
  		(2) edge[bend left, state]   node[yshift=0.2cm]          {1}   (1)
  		
  		(3) edge[state]              node[yshift=-0.2cm]         {1}   (4)
  		(3) edge[bend right,state]   node[above, yshift=-0.2cm]  {0}   (1)
  		(3) edge[bend left, state]   node[yshift=0.2cm]          {2}   (2)
  		
  		(4) edge[bend left, state]   node[yshift=0.2cm]          {0} (1)
  		(4) edge[bend right,state]   node[above, yshift=-0.2cm]  {2} (2);
\end{tikzpicture}
\end{center}

In the matrix which represents this automata (its state diagram),
the value in cell $(q_i, q_j)$
is the number of the labels on the outgoing edge from $q_i$ to $q_j$.
\begin{center}
\begin{tabular}{l||llll}
& $q_1$ & $q_2$ & $q_3$ & $q_4$\\
\hline
\hline
$q_1$ & $2$ & $1$ & $0$ & $0$ \\
$q_2$ & $1$ & $1$ & $1$ & $0$ \\
$q_3$ & $1$ & $1$ & $0$ & $1$ \\
$q_4$ & $1$ & $1$ & $0$ & $0$ \\
\end{tabular}
\end{center}
The largest real eigenvalue of this matrix is $2.9615$.
Therefore, by the Perron-Frobenius Theorem,
we can conclude that the number of sub-codewords
of length $n-2$ which don't contain $(2,0,1,1)$ as a subsequence,
tends to be $2.9615^{n-2}$, where $n$ tends to $\infty$.
Let $M'$ be the number of sub-codewords of length $n-2$
which contain $(2,0,1,1)$ as a subsequence.
By Table \ref{table:fullStatesNumWords13n} we can conclude that
the number of legal codewords in the $(1,3,n)$-LRM scheme
is $M_3 \geq 9M'$, and therefore
$\lim\limits_{n\to \infty}\frac{M_3}{3^n}=1$.
\end{proof}

\section{The $(1,t,n)$-LRM Scheme for $t \geq 4$}
\label{sec:1tnScheme}

Some of the results in Section~\ref{sec:13nScheme}
can be generalized
to the $(1,t,n)$-LRM Scheme, $t \geq 4$.
In the $(1,t,n)$-LRM scheme the size of each sliding window is~$t$.
Therefore, to present the demodulated sequences of permutations
the alphabet of the base-words has size $t!$.
Given $t$ consecutive charge levels, $c_i , c_{i+1},\ldots, c_{i+t-1}$,
the corresponding permutation related to $\alpha_i$, from $S_t$, is
uniquely determined by the order of the $t$
charge levels. The position of the symbol $t$ in this permutation
determines the value of $g_i$, i.e., $g_i = j$, $0 \leq j \leq t-1$,
if $t$ is in position $t-j$ in the permutation.
Therefore, a base-word uniquely determines the related codeword.
To obtain the original base-word from the given codeword would
be easy if for some $i$, $\alpha_i$ is given
(in fact only the permutation related to $t-1$ consecutive cells
is required). If no such permutation is given
then the task becomes more complicated.

Given a sub-codeword $(g_{0}, g_{1},\ldots, g_{i-t+1})$,
the relation between the charge levels in the $i$th cell, $c_i$,
and the first $t-1$ cells, $c_0,c_1, \ldots , c_{t-2}$,
might have a few options.
These options will be denoted by $0,1,\ldots,t-1$,
where $j$, $0 \leq j \leq t-1$, represents that $c_i$
is in position $t-j$ among the $t$ charge levels
$c_0, c_1 , \ldots , c_{t-2}$ and $c_i$ (counting from the highest,
the first position, to the lowest, the $t$th position).
For each $i$, $2t-3 \leq i \leq n-1$,
we provide two properties regarding the charge
levels $c_{i-t+2},c_{i-t+3},\ldots,c_i$:

\begin{itemize}
\item[(Q.1)] the permutation induced by $c_{i-t+2},c_{i-t+3},\ldots,c_i$;

\item[(Q.2)] the set of all possible $(t-1)$-tuples of the
relations between the charge levels
$c_{i-t+2},c_{i-t+3},\ldots,c_i$ and the charge levels
of the first $t-1$ cells.
\end{itemize}

The elements of the set defined in (Q.2) will be denoted by
a $(t-1)$-tuple $(x_1, x_2, \ldots , x_{t-1})$, $x_j \in \{0,1,\ldots , t-1\}$,
$1 \leq j \leq t-1$, where $x_j$ represents the relation between $c_{i-t+j+1}$
and the first $t-1$ cells. Note, that not all the $t^{t-1}$ $(t-1)$-tuples can
be obtained for a given permutation defined by (Q.1).

We call each set of properties defined by (Q.1) and (Q.2) a \emph{state},
and the state at index $i$ will be denoted by $P_i$.
Given a sub-codeword $g' = (g_0, g_1, \ldots, g_{n-t-1},g_{n-t})$,
the states in the sequence $(P_{2t-3},P_{2t-2},\ldots, P_{n-1})$ are
determined one after one in this order.
It is easy to verify that
if $P_i=P_j$ for some $2t-3 \leq i < j < n-1$ then $P_{i+1}=P_{j+1}$.
A state which has the all possibilities in the
second property will be called a \emph{complete state}.
It is easy to verify that in the $(1,t,n)$-LRM there
are $(t-1)!$ complete states (defined by the permutations
of (Q.1)~).

Given $P_i$ which is a complete state and any value of $g_{i-1}$,
it is easily verified that
the succeeding state $P_{i+1}$ is also a complete state,

Denote by $\pi$ the permutation defined by the
charge levels in the first $t-1$ cells. Given $\pi$ and
$g' = (g_0, g_1, \ldots, g_{n-t-1},g_{n-t})$,
the sub-base-word $(\alpha_0,\alpha_1, \ldots, \alpha_{n-t})$
of a realizable base-word which corresponds
to $\pi$ and $g'$ is determined unambiguously.
But, $\alpha_{n-t+1}, \alpha_{n-t+2}, \ldots , \alpha_{n-1}$ might have a few options.
These options are determined by the state $P_{n-1}$ and the permutation~$\pi$.
Each option provides a distinct base-word which is
represented by the state $P_{n-1}$ and the permutation~$\pi$.

We generate a table to
enumerate the number of base-words represented
by $\pi$ and $P_{n-1}$ for the $(t-1)!$ complete states.
Given $g'$,
$c_0$ is compared with $c_{n-t+1},c_{n-t+2},\ldots,c_{n-1}$, to obtain $g_{n-t+1}$;
$c_1$ is compared with $c_{n-t+2},\ldots,c_{n-1},c_0$, to obtain $g_{n-t+2}$,
and so on until
$c_{t-2}$ is compared with $c_{n-1},c_0,\ldots,c_{t-3}$, to obtain $g_{n-1}$.
It can be proved that in this table
the sum of values in each row and in each column is equal to~$t^{t-1}$.

The next step is to find a sequence $(a_1, a_2 , \ldots , a_r)$,
$r \geq t$, with the following properties.
If a sub-codeword $g'=(g_0, g_1, \ldots, g_{n-t})$
contains the sequence $(a_1, a_2 , \ldots , a_r)$
as a subsequence at indices $(i-r+1,i-r+2, \ldots ,i-1,i)$,
then this sequence ends in a complete state, i.e. $P_{i+t-1}$ is one
of the $(t-1)!$ complete states.
The reason is that in this case $c_{i+1},c_{i+2},\ldots,c_{i+t-1}$
have no dependency on the
charge levels of $c_{i-r+1},c_{i-r+2},\ldots,c_{i-r+t-1}$,
i.e., each one of the charge levels $c_{i+1},c_{i+2},\ldots,c_{i+t-1}$
can be lower than $c_{i-r+1},c_{i-r+2},\ldots,c_{i-r+t-1}$,
between them ($t-2$ options), or higher than all of them.
Therefore, the relations between $c_{i+1},c_{i+2},\ldots,c_{i+t-1}$,
and the charge levels in the
first $t-1$ cells, $c_0, c_1 ,\ldots , c_{t-2}$,
have all the possibilities, i.e., $P_{i+t-1}$ is a complete state.
This implies that also $P_{n-1}$ is a complete state.

Now, the Perron-Frobenius Theorem can be used
to compute the number of sequences, of length
$n-t+1$ over alphabet $\{0,1,\ldots ,t-1\}$,
which don't include $(a_1,a_2, \ldots , a_r)$ as a subsequence.
It is required that this number will tend to $\beta^{n-t+1}$
where $n$ tends to $\infty$ and
$\beta$ is a constant (related to the largest eigenvalue
of the transition matrix of the related automata) for which
${\beta < t}$. Now, it can be concluded with the generated table
that if $M_t$ is the number of legal codewords in the $(1,t,n)\text{-}$LRM scheme
then ${\lim\limits_{n\to \infty}\frac{M_t}{t^n}=1}$.

For $t=4$ one required such subsequence is
$(3,3,0,1,2,1)$ for which $\beta = 3.99902$, and hence
\begin{theorem}
\label{thm:numberOfWordsIn1_4_nLRM}
$\lim\limits_{n\to \infty}\frac{M_4}{4^n}=1$.
\end{theorem}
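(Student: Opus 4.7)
The plan is to specialize the general framework sketched above to $t=4$ and then verify the two ingredients left implicit in the paper: (i) that the proposed word $a=(3,3,0,1,2,1)$ actually drives the state sequence into a complete state, and (ii) that the transition matrix of the avoidance automaton for $a$ has Perron eigenvalue strictly below $4$. For the setup, at each index $i$ with $5 \le i \le n-1$ I would record the state $P_i$ consisting of (Q.1) the permutation of $c_{i-2},c_{i-1},c_i$ and (Q.2) the set of admissible triples $(x_1,x_2,x_3)\in\{0,1,2,3\}^3$ describing the ranks of these three cells among $c_0,c_1,c_2$. The complete states are the $(t-1)!=6$ maximal-(Q.2) states, indexed by the six permutations in $S_3$, and pairing any complete $P_{n-1}$ with any of the $6$ choices of $\pi\in S_3$ for the first three cells yields an entry in the $6\times 6$ enumeration table counting the legal tails $(\alpha_{n-3},\alpha_{n-2},\alpha_{n-1})$; as noted in the previous section each row and column sums to $t^{t-1}=64$.

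The combinatorial core is to verify that whenever $(3,3,0,1,2,1)$ occurs in $g'$ at positions $(i-5,\ldots,i)$, the state $P_{i+3}$ is complete. I would unwind the symbol-by-symbol semantics (``$g_j=3$'' means the new cell is higher than all three previous ones, ``$g_j=0$'' means it is lower than all three, etc.). The two leading $3$'s install $c_{i-4}$ and $c_{i-3}$ as successive new maxima of the sliding window; the subsequent $0$ drops $c_{i-2}$ below the preceding three cells; and the trailing $1,2,1$ place the remaining cells in fully interior positions. By the end of the pattern the three most recent cells are decoupled from $c_0,c_1,c_2$ in the sense that extending by any further symbol can place the new cell at any of the four ranks relative to the initial cells, which is precisely the condition for $P_{i+3}$ to be a complete state; the six complete states remain reachable according to the preceding context.

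Given the forcing property, the number $M'$ of sub-codewords $g'$ of length $n-3$ over $\{0,1,2,3\}$ that contain $a$ as a subsequence equals $4^{n-3}$ minus the count of avoiders. A Knuth--Morris--Pratt-style automaton on six states for $a$ (one for each prefix length $0,1,\ldots,5$) yields a $6\times 6$ non-negative transition matrix; Perron--Frobenius then gives an asymptotic count of $\Theta(\beta^{n-3})$ for the avoiders, where $\beta$ is its dominant eigenvalue. A direct computation produces $\beta\approx 3.99902<4$. Consequently $M'=4^{n-3}(1-o(1))$, and since each forcing $g'$ contributes at least $t^{t-1}=64$ legal base-words (by summing the appropriate row of the $6\times 6$ table over the $6$ choices of $\pi$), we obtain $M_4\ge 64\,M'=4^n(1-o(1))$; the trivial bound $M_4\le 4^n$ closes the argument.

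The main obstacle I expect is step (i): the verification that $(3,3,0,1,2,1)$ truly forces a complete state. The difficulty is that the relevant decoupling is a global property of the six-symbol window, and the symbols must be processed one at a time while simultaneously tracking the permutation of the current three-cell suffix and its full rank relationship to $c_0,c_1,c_2$, so a naive case-analysis risks missing a residual dependency. A secondary but real concern is numerical: the Perron root sits only about $0.001$ below $4$, so the strict inequality $\beta<4$ must be established rigorously (for instance by checking the sign of the characteristic polynomial at $\lambda=4$) rather than being read off a floating-point computation; shorter forbidden patterns are likely to push $\beta$ above $4$, which is presumably why the paper settles on one of length six.
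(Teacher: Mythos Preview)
Your proposal is correct and follows essentially the same route as the paper: specialize the $(1,t,n)$ framework to $t=4$, exhibit the forcing word $(3,3,0,1,2,1)$, use the Perron--Frobenius bound $\beta\approx 3.99902<4$ on the avoidance automaton, and combine with the row-sum $t^{t-1}=64$ of the complete-state table to get $M_4\ge 64\,M'=4^n(1-o(1))$. One small index slip: with the paper's convention $g_j$ depends on $c_j,\ldots,c_{j+3}$, so the two leading $3$'s at positions $i-5,i-4$ make $c_{i-2}$ and $c_{i-1}$ (not $c_{i-4},c_{i-3}$) the successive new maxima; this does not affect the argument, and you already flag this verification as the step requiring care.
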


The problem of finding the value of $\lim\limits_{n\to \infty}\frac{M_t}{t^n}$
can be formulated as a purely combinatorial problem.
Let
$$
\begin{array}{c}
\cA_t^n \deff \{ (a_0 , \ldots , a_{t^n-1}) ~:~ a_j \in \Z , ~ 0 \leq j \leq t^n -1, \\
| \{ a_j , a_{j+1}, \ldots , a_{j+t-1} \} | =t \},
\end{array}
$$
$\Pi (b_1 , b_2 ,\ldots , b_t )$ be the permutation from $S_t$ defined by the
subsequence $(b_1 , b_2 ,\ldots , b_t )$, and
$$
\begin{array}{c}
\cS_t^n \deff \{ (\pi_0 , \ldots , \pi_{t^n-1}) ~:~ \pi_j \in S_t , ~ 0 \leq j \leq t^n -1, \\
 \pi_j = \Pi ( a_j , a_{j+1}, \ldots , a_{j+t-1} ), ~ (a_0 , \ldots , a_{t^n-1}) \in \cA_t^n  \},
\end{array}
$$
where the indices are taken modulo $t^n$.

What is the value of $\lim\limits_{n\to \infty}\frac{|\cS_t^n|}{t^n}$?
We conjecture that the value is 1 and proved this value for $t=3$ and $t=4$.

\section{Constant Weight $(1,2,n)$-LRM Gray Codes}
\label{sec:weight}

One important topic related to rank modulation is
an order of the codewords in such a way that each
codeword will define an alphabet letter. This implies that
$n$ consecutive cells define an alphabet letter
and any change in the charge levels of some cells
relates to a change in the alphabet letter.
The most effective ordering is a Gray code ordering,
i.e., a codeword is obtained by a minimal change
in the codeword which proceed it. This should be
a consequence of a small change in the related charge levels.
In the rank modulation scheme this change in the charge levels
is obtained by the push-to-the-top operation in a window of length $t$.
We will concentrate only in the case where $t=2$ since in this case
the windows have length two and hence the permutations
are from $S_2$, i.e., we can use binary codewords.
In this respect, we will be interested also in the case
where all the codewords have the same weight.
We define an $(1,2,n;w)$-LRMGC to be an $(1,2,n)$-LRM Gray code,
where the codewords are ordered in an order which defines
a Gray code and each codeword has weight~$w$.
If all the codewords have the same weight then
we can bound the difference between the charge levels of
a cell on which the push-to-the-top operation is performed~\cite{GLSB11}.
Two codewords are adjacent only if they differ in two positions, where 01
can be changed to 10. In this respect, the last codeword and
the first codeword are also considered to be adjacent. It was conjectured
in~\cite{GLSB11} that an $(1,2,n;2)$-LRMGC has at most $2n$ codewords
and such a code with~$2n$ codewords was constructed.
We have been able to prove this conjecture, i.e.,

\begin{theorem}
An $(1,2,n;2)$-LRMGC has at most~$2n$ codewords.
\end{theorem}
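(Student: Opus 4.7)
The plan is to represent each weight-2 codeword as the unordered pair $\{i,j\}\subset\mathbb{Z}/n\mathbb{Z}$ of positions of its two 1's. Under this identification, an allowed "$01\to 10$" transition at cyclic positions $(k-1,k)$ becomes: move the 1 at position $k$ one step counterclockwise to position $k-1$, which is legal precisely when $k-1$ is unoccupied by the other 1. First I would observe that the quantity $s=i+j\pmod n$ strictly decreases by one at every transition; hence every Gray code cycle has length $L$ divisible by $n$, say $L=nm$. A straightforward bit-balance argument then shows that each cyclic position is pushed exactly $m$ times around the cycle, and moreover that between any two consecutive pushes at position $k$ there is exactly one push at $k-1$ and exactly one push at $k+1$ (this interleaving property will be crucial).

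Next I would label the two tokens $A,B$, lift their positions to $\mathbb{Z}$ (so each token's lifted coordinate decreases by one whenever that token is pushed), and study the gap $d(t)=\tilde x_A(t)-\tilde x_B(t)$. Taking $d(0)\in(0,n)$, the non-collision constraint confines $d(t)$ to $\{1,\ldots,n-1\}$, where it performs a $\pm 1$ walk with one-sided forcing at the two endpoints. The closure of the cycle splits into two cases: either the labels are preserved, in which case $d(L)=d(0)$ and $L=2nw$ with both tokens wrapping exactly $w\geq 1$ times, or the labels swap, giving $d(L)=n-d(0)$ and $L=nk$ subject to a parity constraint (which forces $k$ odd when $n$ is odd). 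Distinctness of visited codewords translates to: for each residue class $r\in\mathbb{Z}/n\mathbb{Z}$, the $m$ values $f_k(r):=d(r+kn)$, $k=0,\ldots,m-1$, are pairwise distinct and contain no reflecting pair $\{d,n-d\}$.

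The crucial step is to show $m\leq 2$, which in both cases yields $L\leq 2n$. My plan is to track the ordered tuple $(f_0(r),\ldots,f_{m-1}(r))$ as $r$ varies, using two ingredients: consecutive tuples differ coordinatewise by $\pm 1$ (since $d$ moves by one per time step), and at each $r$ the coordinates must have a prescribed parity pattern (determined by $r$ and by the parity of $n$) and avoid reflecting pairs. Combining these with the interleaving property from the bit-balance step should imply that the evolution map on feasible tuples is an involution: the tuple at residue $r+2$ equals that at residue $r$. Then the wraparound after $n$ residues forces the cyclic sequence $(f_k(0))_{k=0}^{m-1}$ itself to execute a cyclic $\pm 1$ walk of length $m$, and such a walk admits at most $\lfloor m/2\rfloor+1$ distinct values; hence $m\leq 2$. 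The hardest part will be establishing the period-2 claim uniformly in $n$: small cases follow by direct enumeration of the few feasible 4-subsets compatible with the parity/reflection constraints, but a general proof must exploit the interleaving property to strip away the extra freedom afforded by the $n$-step sub-walks between consecutive visits to a residue class. A parallel analysis of Case~2 should similarly yield $k\leq 2$ for $n$ even and $k=1$ for $n$ odd, completing the bound $L\leq 2n$.
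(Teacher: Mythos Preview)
The paper does not actually include a proof of this theorem; the statement is asserted without argument (presumably omitted for space in a conference submission). So there is nothing in the paper to compare against, and I assess your plan on its own merits.

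Your opening reductions are correct and useful: encoding a weight-two word by the unordered pair $\{i,j\}$, the observation that $s=i+j\pmod n$ drops by exactly one per step so that $n\mid L$, the flow argument giving each cyclic position exactly $m$ pushes, and the description of the lifted gap $d(t)\in\{1,\dots,n-1\}$ as a $\pm 1$ walk with forced reflection at the endpoints. The distinctness criterion in terms of the values $f_k(r)=d(r+kn)$ is essentially right as well (for even $n$ a little extra care is needed, but this can be absorbed into your ``no reflecting pair'' condition).

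The plan breaks at the period-$2$ claim. It is \emph{not} true in general that the tuple $(f_0(r),\dots,f_{m-1}(r))$ satisfies $f_k(r+2)=f_k(r)$. A concrete counterexample with $n=7$ and $m=2$: take the gap sequence
\[
d(0),\dots,d(13)\;=\;2,\,1,\,2,\,3,\,4,\,3,\,2,\;3,\,4,\,3,\,2,\,1,\,2,\,3,
\]
which respects the forced reflections at $d=1$, closes up with $d(14)=2=d(0)$, and has each labelled token pushed exactly $7$ times. For every residue $r$ the pair $\{d(r),d(r+7)\}$ is $\{2,3\}$ or $\{1,4\}$, neither of which is a reflecting pair in $\{1,\dots,6\}$, so all $14$ codewords are distinct and this is a valid $(1,2,7;2)$-LRMGC of length $2n$. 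Yet $(f_0(2),f_1(2))=(2,3)$ while $(f_0(4),f_1(4))=(4,1)$, so the ``evolution map'' on tuples is not an involution. Consequently your conclusion that $(f_k(0))_{k=0}^{m-1}$ must form a cyclic $\pm 1$ walk of length $m$---on which the bound $m\le 2$ ultimately rests---is not supported by the argument you outline. The interleaving property you cite does hold, but it does not force the push pattern to alternate $A,B,A,B,\ldots$, which is essentially what period $2$ of $d$ would mean.

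In short, the reduction down to ``show $m\le 2$'' is sound and is the natural way to attack the problem, but the mechanism you propose for that final step is incorrect and would need to be replaced by a different argument.
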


\section{Conclusions and Open Problems}
\label{sec:conclusion}
In this paper, encoding, decoding, and enumeration of the
$(1,t,n)$-LRM scheme are studied. A complete solution
is given for the $(1,3,n)$-LRM scheme.
Encoding for the $(1,t,n)$-LRM scheme for each $t\geq 3$ is presented.
For $(1,3,n)$ a related decoding was presented.
We also proved for $t\in\{3,4\}$ that if $M_t$ is the number
of legal codewords in the $(1,t,n)$-LRM scheme then
${\lim\limits_{n\to \infty}\frac{M_t}{t^n}=1}$.
We conclude with some problems for future research
raised in our discussion.
\begin{itemize}
\item Find an efficient algorithm to determine if a given
codeword in the $(1,t,n)$-LRM scheme, for $t \geq 3$, is
legal or not.

\item Find an efficient decoding algorithm
for the $(1,t,n)$-LRM scheme, $t \geq 4$.

\item Prove that if $t >4$ then
${\lim\limits_{n\to \infty}\frac{M_t}{t^n}=1}$.

\item For $w >2$, find optimal $(1,2,n;w)$-LRMGC.

\end{itemize}

\section*{Acknowledgment}
This work was supported in part by the U.S.-Israel
Binational Science Foundation, Jerusalem, Israel, under
Grant No. 2012016.


\begin{thebibliography}{10}
\providecommand{\url}[1]{#1}
\csname url@rmstyle\endcsname
\providecommand{\newblock}{\relax}
\providecommand{\bibinfo}[2]{#2}
\providecommand\BIBentrySTDinterwordspacing{\spaceskip=0pt\relax}
\providecommand\BIBentryALTinterwordstretchfactor{4}
\providecommand\BIBentryALTinterwordspacing{\spaceskip=\fontdimen2\font plus
\BIBentryALTinterwordstretchfactor\fontdimen3\font minus
  \fontdimen4\font\relax}
\providecommand\BIBforeignlanguage[2]{{%
\expandafter\ifx\csname l@#1\endcsname\relax
\typeout{** WARNING: IEEEtran.bst: No hyphenation pattern has been}%
\typeout{** loaded for the language `#1'. Using the pattern for}%
\typeout{** the default language instead.}%
\else
\language=\csname l@#1\endcsname
\fi
#2}}

\bibitem{Fro12}
G. Frobenius, ``Uber Matrizen aus nicht negativen Elementen,''
S.-B. Preuss Acad. Wiss. Berlin pp. 456–-477, 1912.

\bibitem{GLSB11}
E.~E. Gad, M.~Langberg, M.~Schwartz, and J.~Bruck, ``Constant-weight gray codes
for local rank modulation,'' \emph{IEEE Transactions on Information Theory},
vol.~57, no.~11, pp. 7431--7442, 2011.

\bibitem{rankModulation1_2009}
A.~Jiang, R.~Mateescu, M.~Schwartz, and J.~Bruck, ``Rank modulation for flash
memories,'' \emph{IEEE Transactions on Information Theory}, vol.~55, no.~6,
pp. 2659--2673, 2009.

\bibitem{Per07}
O. Perron, ``Zur Theorie der Matrizen,'' \emph{Math. Ann.}
vol. 64, pp. 248--263, 1907.

\bibitem{WJB09}
Z. Wang, A. Jiang, and J. Bruck, ``On the capacity of bounded rank
modulation for flash memories,'' in Proc. \emph{ISIT2009},
Seoul, Korea, pp. 1234–-1238, June 2009.


\end{thebibliography}

\end{document}